\definecolor{myurlcolor}{rgb}{0, 0 ,0.4}
\definecolor{mycitecolor}{rgb}{0, 0.5 ,0}
\definecolor{myrefcolor}{rgb}{0.5 ,0 ,0}
\newcommand{\beq}[0]{\begin{equation}}
\newcommand{\eeq}[0]{\end{equation}}
\newcommand{\pz}[0]{p(\boldsymbol{0}|\boldsymbol{0})}
\newcommand{\ket}[1]{|#1\rangle}
\newcommand{\bra}[1]{\langle#1|}
\newcommand{\one}{\leavevmode\hbox{\small1\normalsize\kern-.33em1}}
\def\be{\begin{equation}}
\def\ee{\end{equation}}
\def\ben{\begin{eqnarray}}
\def\een{\end{eqnarray}}
\def\eea{\end{array}}
\def\bea{\begin{array}}
\newcommand{\bei}{\begin{itemize}}
\newcommand{\eei}{\end{itemize}}
\newcommand{\Ke}[1]{\big|#1\big\rangle}
\newcommand{\Br}[1]{\big< #1\big|}
\renewcommand{\emph}[1]{\textbf{#1}}
\theoremstyle{plain}
\newtheorem{thm}{Theorem}
\theoremstyle{definition}
\theoremstyle{remark}
\newcommand{\orcid}[1]{\href{https://orcid.org/#1}{\includegraphics[width=7pt]{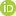}}}
\begin{document}
\title{Single Bell inequality to detect genuine nonlocality in three-qubit pure genuinely entangled states}

\author{Ignacy Stachura}
\author{Owidiusz Makuta \orcid{0000-0002-0070-8709}}
\author{Remigiusz Augusiak \orcid{0000-0003-1154-6132}}
\email{augusiak@cft.edu.pl}

\affiliation{Center for Theoretical Physics, Polish Academy of Sciences, Aleja Lotnik\'ow 32/46, 02-668 Warsaw, Poland}

\begin{abstract}
    It remains an open question whether every pure multipartite state that is genuinely entangled is also genuinely nonlocal. Recently, a new general construction of Bell inequalities allowing the detection of genuine multipartite nonlocality (GMNL) in quantum states was proposed in [F. J. Curchod, M. L. Almeida, and A. Ac\'in, \href{https://iopscience.iop.org/article/10.1088/1367-2630/aaff2d}{New
    J. Phys. \textbf{21}, 023016 (2019)}] with the aim of addressing the above problem. Here we show how, in a simple manner, one can improve this construction to deliver finer Bell inequalities for detection of GMNL. Remarkably, we then prove one of the improved Bell inequalities to be powerful enough to detect GMNL in every three-qubit genuinely entangled state. We also generalize some of these inequalities to detect not only GMNL but also nonlocality depth in multipartite states and we present a possible way of generalizing them to the case of more outcomes.
\end{abstract}

\maketitle
\section{Introduction}

Using entangled quantum particles one may perform an experiment which gives rise to correlations that can by no means be explained by a local realistic theory. The existence of such correlations, referred to as Bell nonlocality, is one of the most characteristic features of quantum mechanics. Moreover, nonlocal correlations have been harnessed as a resource for numerous device-independent applications such as quantum cryptography \cite{Ac_n_2007}, certification of randomness \cite{Pironio_2010} or self-testing \cite{10.5555/2011827.2011830,_upi__2020}.

One of the fundamental problems in quantum information and foundations of quantum physics is understanding the relationship between the Bell nonlocality and quantum entanglement, another key feature of quantum theory. While the former implies the latter in the sense that a state that exhibits Bell non-locality must be necessarily entangled, the opposite implication does not always hold true. Even though a considerable effort has been devoted to 
understanding which entangled states do or do not violate Bell inequalities (see, e.g., Refs. \cite{PhysRevA.40.4277,PhysRevLett.115.030404,Augusiak_2014,Bowles_2016}), the relationship between entanglement and nonlocality for general composite quantum systems remains unclear even in case of pure states. In fact, while this connection has been elucidated for two-particle systems, the exploration of the interplay between entanglement and nonlocality in systems involving more than two particles remains an active area of investigation. Indeed, it was demonstrated in Refs. \cite{GISIN1991201,Cabello_2002} that every pure entangled state violates local realism. Subsequently, in Ref. \cite{POPESCU1992293}, Popescu and Rohrlich generalized this observation to the multipartite setting proving that any pure entangled state displays some form of nonlocality.

While one could conclude that these results definitely resolve the problem, this is not entirely the case because the truly relevant question in the multipartite setting is whether genuine entanglement \cite{PhysRevA.65.012107} implies genuine nonlocality \cite{PhysRevD.35.3066, PhysRevA.88.014102}. There have been attempts to settle this problem. In particular, in Ref. \cite{yu2013tripartite}, using a three-partite form of Hardy-type argument, it was proven that any three-qubit genuinely entangled state is also genuinely nonlocal, whereas in Ref. \cite{Chen_2014} the same conclusion was drawn for arbitrary $n$-qubit symmetric states. Despite this progress, the question of whether any genuinely entangled state is genuinely nonlocal, beyond the above simple cases, remains open. 

A possible way to tackle this problem goes through the construction of suitable Bell inequalities whose violation reveals GMNL. While many such Bell inequalities have been derived to date (see, e.g., Refs. \cite{PhysRevD.35.3066,PhysRevA.88.014102,Bancal_2009,PhysRevLett.108.100401,Augusiak_2019,Pandit_2022}), none of them has been proven to be powerful enough to demonstrate genuine nonlocality in every genuinely entangled pure multipartite state.

Another technique to construct Bell inequalities witnessing GMNL which seems promising in this context was proposed recently by Curchod and collaborators in Ref. \cite{Curchod_2019}. It takes a bipartite Bell inequality, called a seed inequality, that obeys certain requirements and reforges it into a multipartite one detecting GMNL. Importantly, it was analytically proven in Ref. \cite{Curchod_2019} that the inequalities obtained in this way with the seed inequality being the CHSH Bell inequality \cite{PhysRevLett.23.880} are capable of revealing GMNL of any three-qubit pure state which is symmetric under permutation of two chosen parties as well as of GHZ-like states of any number of qubits. These analytical results were also supplemented by numerical evidence that one of these inequalities is violated by any four-qubit GME state. It is worth mentioning that more recently, suitable modifications of these Bell inequalities were used in Ref. \cite{Contreras_Tejada_2021} to characterize GMNL in quantum networks. 

Our aim in this work is to show that a slightly improved version of one of the inequalities of Ref. \cite{Curchod_2019} is powerful enough to definitely settle the problem of equivalence of genuine nonlocality and genuine entanglement for three-qubit states. In other words, we introduce a single Bell inequality that is capable of detecting genuine nonlocality in any three-qubit genuinely entangled. We thus not only recover and confirm the result of Ref. \cite{yu2013tripartite}, but we also introduce a universal test of genuine nonlocality for three-qubit states which, unlike the Hardy-type arguments which require a certain set of equalities to be satisfied, is capable of tolerating some amount of experimental imperfections and noises. On the way of achieving the above result we show how to modify the construction of Ref. \cite{Curchod_2019} to provide finer Bell inequalities detecting GMNL which improves that of Ref. \cite{Curchod_2019}. We also generalize some of the inequalities of Ref. \cite{Curchod_2019} to detect the nonlocality depth of multipartite systems. Finally, we explore the possibility of generalizing the approach of Ref. \cite{Curchod_2019} to Bell scenarios involving more outcomes, concentrating on a particular case of three-outcome measurements. Indeed, we show how to suitably modify the CGLMP Bell inequality \cite{Collins_2002} to derive an inequality detecting GMNL in tripartite systems which is analogous to inequalities of Ref. \cite{Curchod_2019}

\section{Preliminaries}

Before presenting our main results, let us first introduce some relevant terminology. We will introduce the notions of genuine multipartite entanglement (GME) and genuine multipartite nonlocality (GMNL) and then in order to make the paper self-contained we will recall  the constructions of Bell inequalities 
detecting GMNL from Ref. \cite{Curchod_2019}.

\subsection{Genuinely multipartite entanglement and nonlocality}

Let us consider $n$ observers $A_1,\ldots,A_n$ sharing some $n$-partite pure state $\ket{\psi}$ from the Hilbert space $\mathcal{H}=\mathbb{C}^{d_1}\otimes\ldots\otimes \mathbb{C}^{d_n}$. Let us then consider a partition of all the parties into two disjoint and nontrivial sets $S$ and $\bar{S}$ such that $S\cup \bar{S}=\mathcal{A}:=\{A_1,\ldots,A_n\}$. We say that $\ket{\psi}$ is entangled across this bipartition if $\ket{\psi}\neq \ket{\phi_S}\otimes\ket{\phi_{\bar{S}}}$ for any two vectors $\ket{\phi_S}$ and $\ket{\phi_{\bar{S}}}$ corresponding to the parties in $S$ and $\bar{S}$, respectively. If the state $\ket{\psi}$ is entangled with respect to all bipartitions, we call it genuinely multipartite entangled (GME) \cite{PhysRevA.65.012107}. An excellent example of a pure GME state is the $n$-qudit Greenberger-Horne-Zeilinger (GHZ) state \cite{Greenberger1989}
\begin{equation}\label{GHZ}
    \ket{\mathrm{GHZ}_{n,d}}:=\frac{1}{\sqrt{d}}\sum_{i=0}^{d-1}\ket{i}^{\otimes n}.
\end{equation}

The above definition of entanglement can further be extended to mixed states. Namely, a state $\rho$ is entangled across a bipartition $S|\bar{S}$ if it does not admit the following representation
\begin{equation}
    \rho=\sum_{\lambda}q_{\lambda}\,\rho_{S}^{\lambda}\otimes\rho_{\bar{S}}^{\lambda},
\end{equation}
where $\rho_{S}^{\lambda}$ and $\rho_{\bar{S}}^{\lambda}$ are quantum states corresponding to 
the sets of parties $S$ and $\bar{S}$, whereas $q_{\lambda}$ is some probability distribution. 
A state that admits the above representation is called separable across $S|\bar{S}$.
We then say that $\rho$ is GME \cite{PhysRevA.65.012107} if it cannot be represented as a convex combination of states that are separable across any possible bipartition, 
\begin{equation}
\rho=\sum_{S|\bar{S}}r_{S|\bar{S}}\sum_{\lambda}q_{\lambda}^{S|\bar{S}}\,\rho_{S}^{\lambda}\otimes\rho_{\bar{S}}^{\lambda},
\end{equation}
where $r_{S|\bar{S}}$ is some probability distribution and the sum is over all nontrivial bipartitions for which both sets $S$ and $\bar{S}$ are nonempty.

Let us now move on to the notions of genuine multipartite nonlocality and nonlocality depth. We again consider $n$ parties sharing a state $\rho$ and we assume that each party performs one of $m$, $d$-outcome measurements on their share of the state. We label the measurements choices and outcomes of party $A_i$ as $x_i = 1,\ldots,m$ and $a_i=0,\ldots,d-1$, respectively. After many rounds of repeated measurements, the parties can estimate the underlying joint probabilities $p(a_1,\ldots,a_n|x_1,\ldots,x_n)$ of obtaining results $a_1,\ldots,a_n$ upon performing the measurements labeled by $x_1,\ldots,x_n$.

The correlations that are produced in this experiment are described by a collection of probability distributions that can be ordered into a vector
\begin{equation}
    \vec{p}:=\{p(\vec{a}|\vec{x}):=p(a_1,\ldots,a_n|x_1,\ldots,x_n)\},
\end{equation}
where $\vec{a}$ and $\vec{x}$ are shorthands for $a_1,\ldots,a_n$ and $x_1,\ldots,x_n$, respectively.

To consider correlations as local in a given scenario, we require that the outcome obtained by one party depends neither on the outcomes nor on the measurements of the other parties. It is still allowed that the probabilities can be correlated by some shared information represented by the variable $\lambda$, and thus we define local correlations as those that admit the following decomposition:
\begin{equation}
\label{local}
    p(\vec{a}|\vec{x})=\sum_{\lambda}q_{\lambda}p_{A_1}(a_1|x_1,\lambda)\ldots p_{A_n}(a_n|x_n,\lambda),
\end{equation}
where $q_{\lambda}$ is some probability distribution and $p_{A_i}(a_i|x_i,\lambda)$ are probability distributions representing the action of each party $A_i$. Correlations that cannot be written in the form \eqref{local} are termed Bell nonlocal. Then, the state $\rho$ is called nonlocal if it gives rise to nonlocal correlations in some Bell scenario.

A more suitable notion in the multipartite case is that of genuine nonlocality. Consider again a bipartition $S|\bar{S}$ with nonempty sets $S$ and $\bar{S}$. We call a correlation $\vec{p}$ local across the bipartition $S|\bar{S}$ if
\begin{equation}\label{eq:local_across}
p(\boldsymbol{a}|\boldsymbol{x})=\sum_{\lambda}q_{\lambda}p_{S}(\boldsymbol{a}_S|\boldsymbol{x}_S,\lambda)p_{\bar{S}}(\boldsymbol{a}_{\bar{S}}|\boldsymbol{x}_{\bar{S}},\lambda),
\end{equation}
where $\boldsymbol{a}_{S}$ and $\boldsymbol{x}_{S}$ are vectors of outcomes and measurement choices of parties belonging to the set $S$, and $p_{S}(\boldsymbol{a}_{S}|\boldsymbol{x}_{S},\lambda)$ are probability distributions that satisfy the non-signaling principle \cite{Barrett_2005}. Then we say that $\vec{p}$ is bilocal if it can be represented as 
\begin{equation}\label{bilocal}
\begin{aligned}
p(\boldsymbol{a}|\boldsymbol{x})&=\sum_{S|\bar{S}}r_{S|\bar{S}}p_{S|\bar{S}}(\boldsymbol{a}|\boldsymbol{x}),
\end{aligned}
\end{equation}
where $p_{S|\bar{S}}(\boldsymbol{a}|\boldsymbol{x})$ are probability distributions local across $S|\bar{S}$ \eqref{eq:local_across}. Finally, we refer to any $\vec{p}$ that does not admit the above decomposition as genuinely multipartite nonlocal (GMNL) \cite{PhysRevD.35.3066, PhysRevA.88.014102}.

Interestingly, one can introduce a more fine-grained classification of nonlocality in the multipartite case to more accurately describe different forms of nonlocality in correlations that are neither local nor GMNL. Notions that suit this purpose are those of $k$-producibility of nonlocality and nonlocality depth. 

Let us consider a partition of the set $\mathcal{A}=\{A_1,\ldots,A_n\}$ into $L$ disjoint nonempty subsets $S_i$ such that $S_1\cup \ldots\cup S_L=\mathcal{A}$ and the size of each of them does not exceed $k$, i.e., $|S_i|\leqslant k$ for some positive integer $k=1,\ldots,n$. We refer to this partition as an $L_k$ partition of $\mathcal{A}$. Consider now correlations that admit the following decomposition:
\begin{equation}\label{Lkpart}
    p(\boldsymbol{a}|\boldsymbol{x})=\sum_{\lambda}q_{\lambda} p_{S_{1}}(\boldsymbol{a}_{S_1}|\boldsymbol{x}_{S_1},\lambda)\cdot\ldots\cdot  p_{S_{L}}(\boldsymbol{a}_{S_L}|\boldsymbol{x}_{S_L},\lambda),
\end{equation}
where as before we assume that $p_{S_{i}}(\boldsymbol{a}_{S_i}|\boldsymbol{x}_{S_i})$ satisfy non-signaling conditions between parties in $S_i$.
We call correlations $\vec{p}$
that can be written in this way $k$-producible with respect to the given $L_k$ partition.
We then call $\vec{p}$ $k$-producible if it can be decomposed into the following convex combination 
\begin{equation}\label{kprod}
    p(\boldsymbol{a}|\boldsymbol{x})=\sum_{L_k}q_{L_k}p_{L_k}(\boldsymbol{a}|\boldsymbol{x})
\end{equation}
of correlations $p_{L_k}(\boldsymbol{a}|\boldsymbol{x})$ that are $k$-producible with respect to various $L_k$ partitions. Moreover, the minimal $k$ for which $p(\boldsymbol{a}|\boldsymbol{x})$ can be expressed in the form \eqref{kprod} is
referred to as the nonlocality depth. In other words, we say that $p(\boldsymbol{a}|\boldsymbol{x})$  has nonlocality depth $k$ if $k$ is the minimal integer $k=1,\ldots,n$ for which the decomposition \eqref{kprod} exists. Hence, for $k=N$, $p(\boldsymbol{a}|\boldsymbol{x})$ is GMNL.

\subsection{Class of Bell inequalities for detection of GMNL}
Let us now briefly recall some of the Bell inequalities constructed in Ref. \cite{Curchod_2019}. 
To this end, we first need to introduce the well-known CHSH Bell inequality \cite{PhysRevLett.23.880}, which serves as a seed inequality for the construction. 
We state it here in its quite unusual form
\begin{equation}\label{CH}
I:=p(00|00)-p(01|01)-p(10|10)-p(00|11)\leqslant 0,
\end{equation}
which can be obtained from the standard form by using the nonsignaling conditions.

Then, let us denote by $I^{i,j}_{\boldsymbol{0}|\boldsymbol{0}}$ the CHSH Bell expression between the parties $A_i$ and $A_j$ lifted to $n$ parties in such a way that the measurement choices and outcomes of the remaining $n-2$ parties are $x_i=0$ and $a_i=0$, that is, 
\begin{equation}\label{Liftings}
\begin{aligned}
I^{i,j}_{\boldsymbol{0}|\boldsymbol{0}}:=&p(\boldsymbol{0}|\boldsymbol{0})-p(0\ldots0 1_i0\ldots 0|0\ldots01_i0\ldots0)\\
&-p(0\ldots 01_j0\ldots 0|0\ldots 01_j0\ldots 0)\\
&-p(\boldsymbol{0}|0\ldots 01_i0\ldots01_j0\ldots 0),
\end{aligned}
\end{equation}
where $\boldsymbol{0}:=0\ldots0$ is a shortcut notation for $n$ consecutive zeros, and the subscripts $i,j$ refer to parties $A_i$ and $A_j$. Importantly, for correlations $\vec{p}$ that are local across any bipartition $S|\bar{S}$ \eqref{local} such that $A_i\in S$ and $A_j\in \bar{S}$, the expression \eqref{Liftings} satisfies $I^{i,j}_{\boldsymbol{0}|\boldsymbol{0}}\leqslant 0$. 

In this notation, the Bell inequalities detecting GMNL from Ref. \cite{Curchod_2019}
can be stated as 
\begin{equation}\label{Isym}
\sum_{i=1}^{n}\sum_{i<j=1}^{n} I^{i,j}_{\boldsymbol{0}|\boldsymbol{0}}\leqslant\binom{n-1}{2}p(\boldsymbol{0}|\boldsymbol{0})
\end{equation}
and 
\begin{equation}\label{I1}
\sum_{j=2}^n I^{1,j}_{\boldsymbol{0}|\boldsymbol{0}}\leqslant (n-2)p(\boldsymbol{0}|\boldsymbol{0}).
\end{equation}
Notice that the first inequality is invariant under permutation of any pair of parties, whereas second one involves the lifted CHSH Bell expressions between a fixed party $A_1$, and the remaining parties $A_i$ with $i=2,\ldots,n$. 

It is worth mentioning that in the simplest Bell scenario involving two dichotomic measurements per observer the above procedure can be phrased in a more general way \cite{Curchod_2019}. In fact, any $m$-partite Bell inequality with $m<n$ which admits the following form 
\begin{equation}
\label{seed condition}
    \pz-\sum_{\mathbf{a},\mathbf{x}\neq\mathbf{0}} \beta_{\mathbf{x}}^{\mathbf{a}} p(\mathbf{a}|\mathbf{x})\leqslant0,
\end{equation}
can be used as a seed inequality for the construction.

In Ref. \cite{Curchod_2019} a few examples of Bell inequalities in the bipartite and multipartite scenarios were presented which admit the above decomposition such as for instance 
the one found in Ref. \cite{PhysRevA.88.014102} which reads
\begin{equation}
\begin{aligned}
I^{1,2,3}_{\mathrm{tri}}:= &p(000|000)-p(010|111)-p(000|011)\\
&-p(001|001)-p(100|110)-p(010|010)\\
&-p(100|100)\leqslant 0    
\end{aligned}
\end{equation}
and detects GMNL in three-partite quantum systems. Taking this inequality as a seed inequality for the construction of \cite{Curchod_2019}, the following inequality was produced
\begin{equation}\label{tripartite}
    \sum_{i>2}I_{\mathrm{tri}}^{1,2,i}\leqslant (n-3)p(\mathbf{0}|\mathbf{0}).
\end{equation}

\section{Results}

\subsection{General construction}
\label{General}

In this section, we show a general technique of formulating inequalities that detect GMNL correlations, and in particular, we show how this technique can be used to improve the inequalities of Ref. \cite{Curchod_2019}, in particular \eqref{I1}. To this end, let us for a moment take a broader perspective and consider a general expression composed of various $m$-partite Bell expressions 
\begin{equation}\label{eq4}
\sum_{g\in G}\mathcal{I}^{g},
\end{equation}
where $G$ is an arbitrarily chosen set of subsets $g\subset \{1,\ldots,n\}$ such that $|g|=m$, and $\mathcal{I}^{g}$ are in general $m$-partite Bell expressions detecting GMNL between parties from the set $\{A_i\}_{i\in g}$ lifted to the $n$-partite scenario for $n>m$ such as for instance the expressions in \eqref{Liftings}.

Our technique requires that the expressions $\mathcal{I}^{g}$ fulfill a couple of conditions.
\begin{itemize}
\item[(i)] First, for all correlations $\vec{p}$ that are local across the bipartition $S|\bar{S}$ \eqref{eq:local_across}, we require that $\mathcal{I}^{g}\leqslant 0$ if there exists a pair $i,j\in g$ such that $A_i\in S$ and $A_j\in\bar{S}$.

\item [(ii)] To formulate the second condition, we first decompose $\mathcal{I}^{g}$ as
a difference of two expressions,
\begin{equation}\label{eq:pm_decomposition}
    \mathcal{I}^{g}=\mathcal{I}^{g}_{+}-\mathcal{I}^{g}_{-},
\end{equation}
where both $\mathcal{I}^{g}_{\pm}$ only contain probabilities with positive weights. In other words, we decompose each expression $\mathcal{I}^g$ into 
two expressions containing probabilities that enter the $\mathcal{I}^g$ with positive and negative signs, respectively. Then, the second condition on expressions $\mathcal{I}^{g}$ requires that the positive parts of every Bell expression $\mathcal{I}^{g}$ are the same, that is,
\begin{equation}\label{decomp}
    \forall_{g\neq g'\in G}\qquad \mathcal{I}^{g}_{+}=\mathcal{I}^{g'}_{+}=:\mathcal{I}_+.
\end{equation}

\end{itemize}
%

Let us also denote by $T$ the combination of probabilities that is common to all the negative parts $\mathcal{I}_{-}^{g}$. In other words, we decompose all $\mathcal{I}_{-}^{g}$ as a sum of $T$ and the remaining terms that are different for different $\mathcal{I}_{-}^{g}$; in particular, if $\mathcal{I}_{-}^{g}$ do not share any common terms, $T=0$.  
All together this implies that for all $g\in G$ we have
\begin{equation}\label{eq:bound_I^ij}
    \mathcal{I}^{g} \leqslant \mathcal{I}_{+} - T
\end{equation}

Our aim now is to determine an upper bound for the expression \eqref{eq4} for bilocal models \eqref{bilocal}. For this purpose, we consider a particular bipartition $S|\bar{S}$. Then, for all $\vec{p}$ that are local across the bipartition $S|\bar{S}$ \eqref{eq:local_across} we have
\begin{equation}
    \sum_{g\in G}\mathcal{I}^{g}\leqslant
    \sum_{g\in G'}\mathcal{I}^{g},
\end{equation}
where $G'\subseteq G$ is defined as 
%
%
\begin{equation}
    G'=\{g\in G\,|\,\{A_{i}\}_{i\in g}\subset S \,\mathrm{or}\, \{A_{i}\}_{i\in g}\subset \bar{S}\};
\end{equation}
for the remaining $g\in G$ such that $g\notin G'$ we employed the condition (i) stated above. Next, using the decomposition \eqref{decomp}, and Ineq. \eqref{eq:bound_I^ij}, the above can further be upper bounded by 
\begin{equation}\label{eq:ineq_G}
    \sum_{g\in G}\mathcal{I}^{g}\leqslant |G'|\left(\mathcal{I}_{+}-T\right).
\end{equation}
This leads us to our first result which we formulate as the following theorem.
\begin{thm}\label{thm:GMNL_ineq}
Consider a Bell expression \eqref{eq4} such that the conditions (i) and (ii) are satisfied by 
$\mathcal{I}^g$ for each $g\in G$. Then, violation of the following Bell inequality

%
\begin{equation}\label{eq:ineq_proof}
    \sum_{g\in G}\mathcal{I}^{ij}\leqslant \gamma\left(\mathcal{I}_{+}-T\right)
\end{equation}
certifies GMNL, where $\gamma=\max_{S|\bar{S}}|G'|$ and $T$ is a common term appearing in all the negative parts $\mathcal{I}_{-}^{g}$ under the decomposition \eqref{eq:pm_decomposition}. 
\end{thm}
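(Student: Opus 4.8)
The plan is to prove the contrapositive: I will show that \emph{every} bilocal correlation \eqref{bilocal} satisfies \eqref{eq:ineq_proof}, so that any violation excludes bilocality and thereby certifies GMNL. The organizing observation is that the functional $\vec{p}\mapsto \sum_{g\in G}\mathcal{I}^{g}-\gamma(\mathcal{I}_{+}-T)$ is linear in $\vec{p}$, while the bilocal set is by \eqref{bilocal} the convex hull of correlations that are local across a single bipartition $S|\bar{S}$ in the sense of \eqref{eq:local_across}. It therefore suffices to establish \eqref{eq:ineq_proof} for a correlation local across one fixed cut; the general bilocal bound then follows by averaging against the weights $r_{S|\bar{S}}$ and using linearity of both $\sum_{g}\mathcal{I}^{g}$ and $\mathcal{I}_{+}-T$.

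So I would fix a bipartition $S|\bar{S}$ and a correlation local across it, and split $G=G'\sqcup(G\setminus G')$, where $G'$ collects the $g$ whose parties all lie on one side of the cut. Two bounds are already available from the hypotheses: for $g\notin G'$ there is a separated pair $i,j\in g$, so condition (i) gives $\mathcal{I}^{g}\leqslant 0$; and for \emph{every} $g\in G$ the decomposition \eqref{eq:pm_decomposition} together with the common positive part \eqref{decomp} and the fact that $T$ sits inside each negative part yields the pointwise estimate \eqref{eq:bound_I^ij}, namely $\mathcal{I}^{g}\leqslant \mathcal{I}_{+}-T$. These are exactly the ingredients assembled just before the theorem.

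The delicate point, and the one I expect to be the main obstacle, is replacing the per-bipartition count $|G'|$ by the uniform constant $\gamma=\max_{S|\bar{S}}|G'|$. Bounding the split terms by $0$ and the unsplit terms by $\mathcal{I}_{+}-T$ gives only $\sum_{g}\mathcal{I}^{g}\leqslant|G'|(\mathcal{I}_{+}-T)$ as in \eqref{eq:ineq_G}, and enlarging $|G'|$ to $\gamma$ is legitimate only when $\mathcal{I}_{+}-T\geqslant 0$. Since $\mathcal{I}_{+}$ and $T$ are distinct probability combinations, the number $\mathcal{I}_{+}-T$ (evaluated on the given correlation) can be negative, so this monotonicity is not automatic. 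I would settle this with a short case distinction on the sign of $\mathcal{I}_{+}-T$: if $\mathcal{I}_{+}-T\geqslant 0$, then $|G'|(\mathcal{I}_{+}-T)\leqslant\gamma(\mathcal{I}_{+}-T)$ directly; if $\mathcal{I}_{+}-T<0$, I would not use condition (i) at all but instead apply the pointwise bound \eqref{eq:bound_I^ij} to all $|G|$ terms, obtaining $\sum_{g}\mathcal{I}^{g}\leqslant|G|(\mathcal{I}_{+}-T)\leqslant\gamma(\mathcal{I}_{+}-T)$, where the last step uses $|G|\geqslant\gamma$ and the now-negative factor. In either case one arrives at $\sum_{g}\mathcal{I}^{g}\leqslant\gamma(\mathcal{I}_{+}-T)$ with the \emph{same} coefficient $\gamma$ on both sides for the fixed cut.

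With the single-bipartition inequality secured uniformly in $S|\bar{S}$, I would reassemble the general case: writing a bilocal $\vec{p}$ as $\sum_{S|\bar{S}}r_{S|\bar{S}}\,p_{S|\bar{S}}$ and using linearity, $\sum_{g}\mathcal{I}^{g}[\vec{p}]=\sum_{S|\bar{S}}r_{S|\bar{S}}\sum_{g}\mathcal{I}^{g}[p_{S|\bar{S}}]\leqslant\gamma\sum_{S|\bar{S}}r_{S|\bar{S}}(\mathcal{I}_{+}-T)[p_{S|\bar{S}}]=\gamma(\mathcal{I}_{+}-T)[\vec{p}]$, which is \eqref{eq:ineq_proof}. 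Contraposition then shows that violating \eqref{eq:ineq_proof} forces $\vec{p}$ to fall outside the bilocal set, i.e.\ to be GMNL, as claimed.
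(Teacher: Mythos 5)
Your proof is correct, and its skeleton---bound the expression for correlations local across a fixed cut using condition (i) for the split terms and the pointwise bound \eqref{eq:bound_I^ij} for the unsplit ones, then extend to all correlations of the form \eqref{bilocal} by convexity and linearity, and conclude by contraposition---is exactly the paper's. The genuine difference is that you treat rigorously a step the paper glosses over: the passage from the per-cut bound $\sum_{g}\mathcal{I}^{g}\leqslant|G'|\left(\mathcal{I}_{+}-T\right)$ of \eqref{eq:ineq_G} to the uniform bound with $\gamma=\max_{S|\bar{S}}|G'|$. The paper's proof asserts this follows ``by maximizing over all bipartitions,'' which silently uses the monotonicity $|G'|\left(\mathcal{I}_{+}-T\right)\leqslant\gamma\left(\mathcal{I}_{+}-T\right)$ and hence requires $\mathcal{I}_{+}-T\geqslant 0$ on the given correlation; as you observe, nothing guarantees that sign, since $\mathcal{I}_{+}$ and $T$ are probabilities evaluated under different measurement settings. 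A concrete instance: for \eqref{improved00} with $n=3$, take the deterministic product correlation in which $A_1$ always outputs $1$ and $A_2,A_3$ always output $0$; then $\mathcal{I}_{+}-T=p(000|000)-p(100|100)=-1$, while for the cut $A_1|A_2A_3$ one has $|G'|=0$, so $|G'|\left(\mathcal{I}_{+}-T\right)=0\not\leqslant-1=\gamma\left(\mathcal{I}_{+}-T\right)$. (The final inequality still holds for that correlation---the left-hand side equals $-2$---but the paper's chain of estimates does not deliver it.) Your case distinction repairs this cleanly: when $\mathcal{I}_{+}-T<0$ you discard condition (i) altogether and apply \eqref{eq:bound_I^ij}, which is valid pointwise for \emph{any} correlation, to every $g\in G$, obtaining $\sum_{g}\mathcal{I}^{g}\leqslant|G|\left(\mathcal{I}_{+}-T\right)\leqslant\gamma\left(\mathcal{I}_{+}-T\right)$ since $\gamma\leqslant|G|$ and the factor is negative. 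So your write-up is not merely a faithful reproduction of the published argument; it identifies and closes a small but real gap in it, and the theorem survives intact with your patch.
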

\begin{proof}To prove the above theorem first notice that Ineq. \eqref{eq:ineq_proof} is derived from Ineq. \eqref{eq:ineq_G} by maximizing over all bipartitions $S|\bar{S}$, meaning that Ineq. \eqref{eq:ineq_proof} holds true for all $\vec{p}$ local across any $S|\bar{S}$. Next, notice that all $\vec{p}$ admitting Eq. \eqref{bilocal} are a convex combination of probability distributions $p_{S|\bar{S}}(\boldsymbol{a}|\boldsymbol{x})$ local across bipartitions $S|\bar{S}$. And since both sides of Ineq. \eqref{eq:ineq_proof} are linear functions of $\vec{p}$, the violation of Ineq. \eqref{eq:ineq_proof} implies that the correlation $\vec{p}$ does not admit Eq. \eqref{bilocal}, or in other words, that $\vec{p}$ is GMNL.
\end{proof}

\textbf{Example 1.} Let us now show how the above considerations lead to an improvement of Ineq. \eqref{I1} derived in Ref. \cite{Curchod_2019}. In this case $\mathcal{I}^{g}$ are simply the lifted CHSH Bell expressions $I^{i,j}_{\boldsymbol{0}|\boldsymbol{0}}$ \eqref{Liftings} and $G=\{\{1,i\}\}_{i=2}^{n}$ is the set of all pairs of parties $A_1A_i$ for $i>1$. Then, it directly follows that $\mathcal{I}^{+}=p(\boldsymbol{0}|\boldsymbol{0})$ and at the same time, one finds that in this case 
\begin{equation}
    T=p(1\mathbf{0}_{n-1}|1\mathbf{0}_{n-1}),
\end{equation}
where $\mathbf{0}_{n-1}$ is a shorthand for a string of $0$ of length $n-1$.
All this implies that the improved inequality is 
\begin{equation}\label{improved00}
\begin{aligned}
\sum_{j=2}^n I^{1,j}_{\boldsymbol{0}|\boldsymbol{0}}\leqslant (n-2)[p(\mathbf{0}|\mathbf{0})-p(1\mathbf{0}_{n-1}|1\mathbf{0}_{n-1})].
\end{aligned}
\end{equation}

In order to demonstrate that the above inequality is finer than that in \eqref{I1} we consider a mixture of the $n$-qubit GHZ state \eqref{GHZ} and white noise,
\begin{equation}
\label{noise}
   \rho(q):= (1-q)\ket{\mathrm{GHZ}_{n,2}}\!\bra{\mathrm{GHZ}_{n,2}}+q\frac{\mathds{1}}{2^n}
\end{equation}
with $q\in[0,1]$.
The numerical comparison between noise levels needed for violation of either inequality is presented in Fig. \ref{GHZ}. The numerical analysis shows that, in the case of a three-qubit GHZ state, the violation of Ineq. \eqref{I1} is achieved for noise levels of $q< 0.05$ while for the new inequality \eqref{improved00}  we can observe violation for $q<0.106$.
\begin{figure}[t]
\includegraphics[width=\columnwidth]{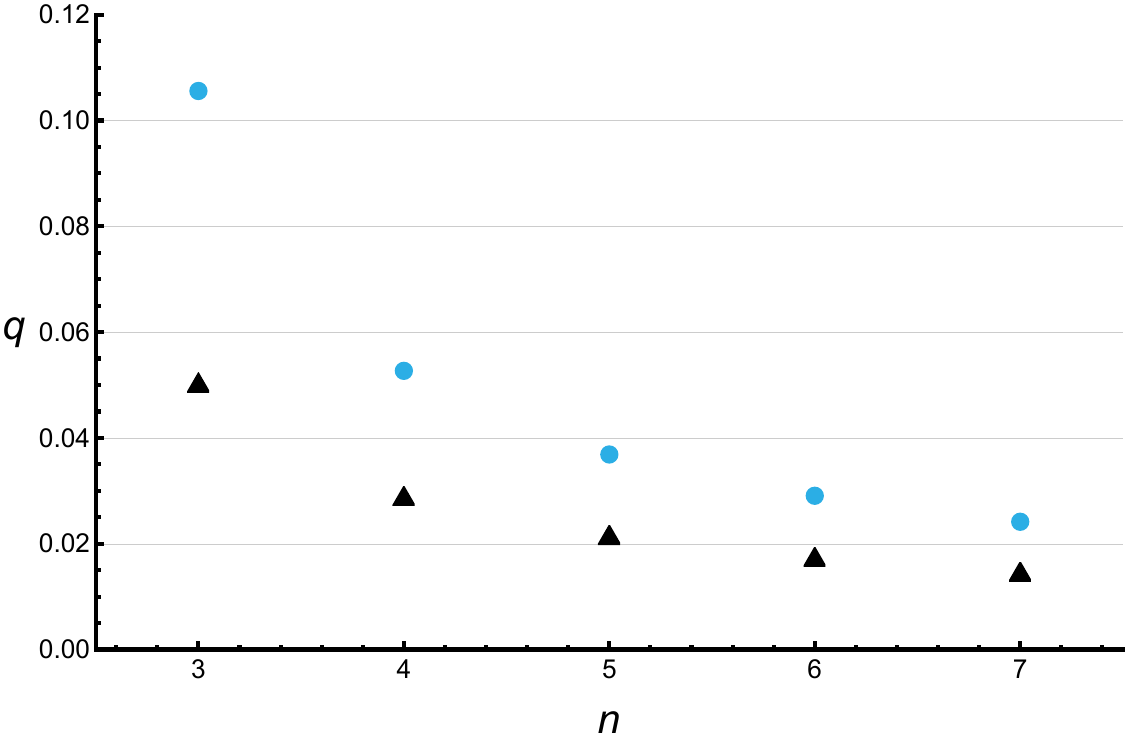}
\caption{The numerical comparison of noise robustness to the white noise of inequalities \eqref{I1} (black triangles) and \eqref{improved00} (blue disks), for a noisy GHZ state \eqref{noise}. The variable $q$ corresponds to the maximal amount of white noise for which a given inequality detects GMNL of the state $\rho(q)$ \eqref{noise}, while $n$ denotes the number of qubits. We see that even though the noise robustness of both inequalities drops with $n$, the new inequality is still outperforming the old one for every $n$.}
\label{GHZ_2}
\end{figure}
Each point in Fig. \ref{GHZ} was obtained by optimizing the Bell expressions \eqref{I1} and \eqref{improved00} over two pairs of dichotomic observables, one pair corresponding to party $A_1$ and the other pair corresponding to the remaining parties; here we assume that all parties except $A_1$ measure the same pair of observables which is motivated by the fact that the noisy GHZ state $\rho(q)$ are permutationally invariant. Each observable was parameterized by two real angles and then the standard methods in \textit{Mathematica} were used to find the maxima \cite{stachura_2024_13343476}.

\textbf{Example 2.} In a similar vein one can improve the inequality \eqref{tripartite}.
Indeed, by following the prescription above one realizes that the following inequality is satisfied by all non-GMNL correlations 
\begin{equation}
\begin{aligned}
\sum_{i>2}I_{\mathrm{tri}}^{1,2,i}&\leqslant (n-3)[p(\boldsymbol{0}|\boldsymbol{0})-p(01\boldsymbol{0}_{n-2}|01\boldsymbol{0}_{n-2})\\
&-p(10\boldsymbol{0}_{n-2}|10\boldsymbol{0}_{n-2})-p(10\boldsymbol{0}_{n-2}|11\boldsymbol{0}_{n-2})],
\end{aligned}
\end{equation}
where $\mathcal{I}_+=p(\boldsymbol{0}|\boldsymbol{0})$ and 
\begin{equation}
\begin{aligned}
T=& p(01\boldsymbol{0}_{n-2}|01\boldsymbol{0}_{n-2})-p(10\boldsymbol{0}_{n-2}|10\boldsymbol{0}_{n-2})\\
&-p(10\boldsymbol{0}_{n-2}|11\boldsymbol{0}_{n-2}).
\end{aligned}
\end{equation}

\subsection{All tripartite, pure GME states are GMNL}

Let us now show that the improved inequality \eqref{improved00} is capable of detecting genuine nonlocality in any three-qubit pure genuinely entangled. Whereas, the equivalence between genuine entanglement and genuine nonlocality for three-qubit pure states was already established with a Hardy-type argument \cite{yu2013tripartite}, here we
reproduce this result with a single Bell inequality. 

\begin{thm}\label{thm}
Every triqubit, pure, genuinely multipartite state violates Ineq. \eqref{improved00}. 
\end{thm}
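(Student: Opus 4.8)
The plan is to specialize Ineq.~\eqref{improved00} to $n=3$, where it collapses to the single scalar condition
\begin{equation}
\mathcal{B}:=p(000|000)-p(100|100)-p(010|010)-p(001|001)-p(000|110)-p(000|101)>0,
\end{equation}
which is exactly $I^{1,2}_{\boldsymbol{0}|\boldsymbol{0}}+I^{1,3}_{\boldsymbol{0}|\boldsymbol{0}}-[p(000|000)-p(100|100)]>0$. So it suffices to exhibit, for every genuinely entangled three-qubit pure state, a choice of measurements for which $\mathcal{B}>0$. First I would absorb local unitaries into the measurement operators and bring $\ket{\psi}$ to the Ac\'in generalized-Schmidt canonical form $\lambda_0\ket{000}+\lambda_1 e^{i\varphi}\ket{100}+\lambda_2\ket{101}+\lambda_3\ket{110}+\lambda_4\ket{111}$, recording that genuine entanglement is equivalent to $\lambda_0\neq0$ together with $(\lambda_2,\lambda_3,\lambda_4)\neq(0,0,0)$; indeed $\lambda_0\neq0$ is the only obstruction to a product across $A_1|A_2A_3$, and once $\lambda_0\neq0$ the other two cuts are entangled automatically.

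The key structural observation I would exploit is that each lifted expression $I^{1,j}_{\boldsymbol{0}|\boldsymbol{0}}$ is simply the CHSH expression, in the form~\eqref{CH}, evaluated on the two-qubit pure state obtained by projecting the remaining party onto the outcome-$0$ eigenvector of its setting-$0$ observable. Concretely, fixing setting-$0$ outcome-$0$ projectors $\proj{0_2}$ and $\proj{0_3}$ for $A_2$ and $A_3$ produces the unnormalized pure conditional states $\ket{\chi_{12}}=\braket{0_3}{\psi}$ and $\ket{\chi_{13}}=\braket{0_2}{\psi}$, and in all four of its terms $I^{1,j}_{\boldsymbol{0}|\boldsymbol{0}}$ keeps the third party at setting $0$, outcome $0$, so $I^{1,j}_{\boldsymbol{0}|\boldsymbol{0}}=\Br{\chi_{1j}}\,\mathrm{CHSH}_{1j}\,\Ke{\chi_{1j}}$. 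Since projecting one qubit of a three-qubit pure state onto a pure state always yields a pure two-qubit state, Gisin's theorem guarantees a CHSH violation on $\ket{\chi_{1j}}$ whenever it is entangled. I would therefore first show that the genuine-entanglement conditions permit a choice of directions $\ket{0_2},\ket{0_3}$ for which both $\ket{\chi_{12}}$ and $\ket{\chi_{13}}$ are entangled: entanglement of a conditional state fails only on a measure-zero set of conditioning directions, namely the zero set of the relevant $2\times2$ determinant, and the full GME condition excludes the degenerate situation in which every conditioning of $A_2$ and of $A_3$ disentangles the partner pair.

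With both conditional states entangled I would then design party $A_1$'s two observables, together with the setting-$1$ observables of $A_2$ and $A_3$, so as to turn the two conditional CHSH tests into an aggregate value exceeding the correction term $p(000|000)-p(100|100)$. Here I would use the feature of the canonical form that the $A_1$-branch structure is common to $\ket{\chi_{12}}$ and $\ket{\chi_{13}}$ (the $\ket{0}_{1}$ component carries the same weight $\lambda_0$ in both), which lets a single pair of $A_1$ observables serve both tests; the computation then reduces $\mathcal{B}$ to an explicit function of the canonical parameters and a handful of measurement angles, whose positivity throughout the region $\lambda_0\neq0,\ (\lambda_2,\lambda_3,\lambda_4)\neq(0,0,0)$ I would verify directly.

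The main obstacle is precisely this final coordination step. Because $A_1$ must reuse the same two observables in both $I^{1,2}_{\boldsymbol{0}|\boldsymbol{0}}$ and $I^{1,3}_{\boldsymbol{0}|\boldsymbol{0}}$, the two conditional CHSH tests cannot be optimized independently, and one must produce a single state-dependent measurement assignment that works uniformly across the entire genuinely entangled parameter region. The delicate part is the degenerate boundary, where one of $\lambda_2,\lambda_3,\lambda_4$ vanishes and the naive conditioning onto $\ket{0}$ disentangles one of the pairs, forcing a genuinely tilted choice of $\ket{0_2}$ and $\ket{0_3}$; handling these cases while keeping $\mathcal{B}>0$ is where the argument must be made fully rigorous.
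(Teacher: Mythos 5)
Your overall strategy (specializing to $n=3$, passing to a canonical form, and reading each lifted $I^{1,j}_{\boldsymbol{0}|\boldsymbol{0}}$ as a CH-form CHSH value on the conditional two-qubit state obtained by projecting out the third party) is sound and close in spirit to the paper's, but there are two genuine gaps. First, your characterization of genuine entanglement in the Ac\'in canonical form is wrong: $\lambda_0\neq 0$ together with $(\lambda_2,\lambda_3,\lambda_4)\neq(0,0,0)$ is \emph{not} sufficient. For instance $\lambda_0\ket{000}+\lambda_2\ket{101}$ with $\lambda_0,\lambda_2\neq 0$ satisfies your conditions yet is a product state across the cut $A_2|A_1A_3$ (party $A_2$ is always in $\ket{0}$); the correct conditions are $\lambda_0\neq 0$, $(\lambda_3,\lambda_4)\neq(0,0)$ and $(\lambda_2,\lambda_4)\neq(0,0)$. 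This is not cosmetic: by Theorem \ref{thm:GMNL_ineq}, a state that is product across some bipartition can never yield $\mathcal{B}>0$, so the positivity ``throughout the region $\lambda_0\neq0,\ (\lambda_2,\lambda_3,\lambda_4)\neq(0,0,0)$'' that your final step is supposed to verify is simply false, and that verification would necessarily fail on the biseparable states your region erroneously contains.

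Second, and more fundamentally, the actual content of the theorem --- producing one concrete measurement assignment, shared by both CHSH tests through $A_1$, whose aggregate beats the strictly positive correction term $p(000|000)-p(100|100)$ for \emph{every} GME state --- is exactly the step you defer to ``verify directly,'' and Gisin's theorem on the two conditional states does not supply it: it guarantees two separate violations that cannot be optimized independently, and says nothing about their sum exceeding the correction. The paper closes this gap constructively. It first disposes of states symmetric under a transposition of two parties by invoking the result of Ref. \cite{Curchod_2019} for Ineq. \eqref{I1} (which \eqref{improved00} refines), so that in the Carteret--Higuchi--Sudbery form one may assume the strict ordering $a\geqslant b>c>d$; it then exhibits an explicit one-parameter family of measurements (angle $\alpha$, with $A_2$'s and $A_3$'s setting-$0$ projectors fixed onto $\ket{0}$ and $\ket{1}$ respectively) chosen so that three of the five negative probabilities vanish identically, $p(010|010)=p(001|001)=p(000|101)=0$; finally it shows the surviving three-term expression is positive by an algebraic estimate at $\alpha=\pi/4$ when $e=0$, and by an intermediate-value argument in $\alpha$ that annihilates the last negative term when $e>0$. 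Without an analogue of these steps --- a corrected GME region and an explicit, uniformly working measurement family --- your proposal remains a plausible plan rather than a proof.
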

While the detailed proof can be found in Appendix \ref{app:GME=GMNL}, here we outline the main idea behind it. To detect GMNL we use the improved inequality \eqref{improved00} which for $n=3$ can be rewritten as
\begin{equation}\label{eq:ineq_n=3}
\begin{aligned}
p(0&00|000)-p(100|100)-p(010|010)\\\
    &-p(001|001)-p(000|110)-p(000|101)\leqslant 0.
\end{aligned}
\end{equation}

To simplify our considerations, we make use of the generalized Schmidt decomposition for three-qubit states introduced in Ref. \cite{Carteret_2000}, which for the case $n=3$, states that every state $\ket{\psi}\in (\mathbb{C}^{2})^{\otimes 3}$ can be decomposed as
\begin{equation}\label{decompose}
\ket{\psi}=a \mathrm{e}^{\mathbbm{i}\varphi}\ket{000}+b\ket{011}+c\ket{101}+d\ket{110}+e\ket{111},
\end{equation}
where $\mathrm{e}$ is the Euler's number, $\varphi\in \mathbb{R}$, and $a,b,c,d,e$ are non-negative coefficients satisfying $a\geqslant x$ for all $x\in\{b,c,d,e\}$, and  $a^{2}+b^{2}+c^{2}+d^{2}+e^{2}=1$ . Moreover, notice that the violation of Ineq. \eqref{I1} implies the violation of Ineq. \eqref{improved00}. Therefore, since the violation of Ineq. \eqref{I1} was used to show that all pure GME states, symmetric with respect to a permutation of the second and third parties are GMNL \cite{Curchod_2019}, we can limit ourselves only to states which are not invariant under such permutation.

To show a violation of Ineq. \eqref{eq:ineq_n=3} we chose measurements parametrized by a coefficient $\alpha\in [0,\pi/2]$ such that for all $\alpha$
\begin{equation}
p(001|001)=p(010|010)=p(000|101)=0.
\end{equation}
Then, we show that for all non-symmetric states \eqref{decompose} there exists $\alpha$ for which 
\begin{equation}
\begin{aligned}
p(0&00|000)-p(100|100)-p(000|110)> 0,
\end{aligned}
\end{equation}
which implies the violation of Ineq. \eqref{eq:ineq_n=3}. 

Interestingly, since this holds for all tripartite, pure states that are not symmetric with respect to a permutation of two parties, it implies that all such states are GMNL, and so also GME.

\subsection{Detection of nonlocality depth}

Interestingly, we can easily generalize the approach of Theorem \ref{thm:GMNL_ineq} to construct inequalities detecting nonlocality depth. In fact,
the following theorem holds true.
\begin{thm}\label{thm3}
Consider a Bell expression \eqref{eq4} such that the conditions (i) and (ii) stated in Sec.  
\ref{General} are satisfied by $I^g$ for each $g\in G$. Then, violation of the following Bell inequality
\begin{equation}\label{eq:ineq_proof_k}
    \sum_{g\in G}\mathcal{I}^{g}\leqslant \gamma_{k}\left(\mathcal{I}_{+}-T\right),
\end{equation}
by correlations $p(\boldsymbol{a}|\boldsymbol{x})$,
where $\gamma_{k}=\max_{S_{1},S_{2},\ldots,S_{L}}|G'|$ for all $|S_{i}|\leqslant k$, 
implies those correlations have nonlocality depth $k+1$. As before, $T$ is a common term appearing in all the negative parts $\mathcal{I}_{-}^{g}$ under the decomposition \eqref{eq:pm_decomposition}.
\end{thm}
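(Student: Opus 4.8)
The plan is to mirror the proof of Theorem~\ref{thm:GMNL_ineq} almost verbatim, replacing the bipartition analysis with an analysis over $L_k$ partitions. First I would fix an arbitrary $L_k$ partition $S_1,\ldots,S_L$ of $\mathcal{A}$ (each $|S_i|\leqslant k$) and consider any correlation $\vec{p}$ that is $k$-producible with respect to it, i.e.\ admits the decomposition \eqref{Lkpart}. The key observation is that condition (i) generalizes immediately: if $\mathcal{I}^g$ is a Bell expression detecting GMNL among the parties in $g$, and the set $g$ is \emph{split} by the partition---meaning there exist $i,j\in g$ with $A_i\in S_p$ and $A_j\in S_q$ for $p\neq q$---then $\mathcal{I}^g\leqslant 0$ for $\vec{p}$ of the form \eqref{Lkpart}. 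Indeed, grouping $S_p$ into $S$ and all the other blocks into $\bar S$ exhibits $\vec{p}$ as local across a bipartition that separates $A_i$ from $A_j$, so condition~(i) applies.

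Next I would define, in analogy with the proof of Theorem~\ref{thm:GMNL_ineq}, the subset
\begin{equation}
G'=\{g\in G\mid g\subseteq S_i \text{ for some } i\},
\end{equation}
collecting precisely those $g$ that are \emph{not} split by the given $L_k$ partition. For $g\notin G'$ the term $\mathcal{I}^g$ is nonpositive by the argument above and can be discarded, while for each $g\in G'$ we invoke the common-positive-part condition (ii) together with Ineq.~\eqref{eq:bound_I^ij} to bound $\mathcal{I}^g\leqslant \mathcal{I}_+-T$. Summing over $g\in G'$ gives
\begin{equation}
\sum_{g\in G}\mathcal{I}^g\leqslant |G'|\,(\mathcal{I}_+-T)
\end{equation}
for every $\vec{p}$ that is $k$-producible with respect to this fixed partition. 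Maximizing $|G'|$ over all admissible $L_k$ partitions yields the bound $\gamma_k(\mathcal{I}_+-T)$ on the right-hand side of \eqref{eq:ineq_proof_k}, valid for all such $\vec{p}$.

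Finally I would close the argument by convexity and linearity, exactly as before. Any $k$-producible correlation is, by \eqref{kprod}, a convex mixture of correlations $p_{L_k}$ each $k$-producible with respect to some $L_k$ partition; since both sides of \eqref{eq:ineq_proof_k} are linear in $\vec{p}$ and the inequality holds for each extremal $p_{L_k}$, it holds for the mixture. Hence violation of \eqref{eq:ineq_proof_k} rules out $k$-producibility, which by the definition of nonlocality depth means $\vec{p}$ has depth at least $k+1$. I would also remark that this recovers Theorem~\ref{thm:GMNL_ineq} as the special case $k=n-1$, where the only $L_k$ partitions are bipartitions and $\gamma_{n-1}=\gamma$.

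I expect the only genuine subtlety---the rest being a transcription of the earlier proof---to be the correct generalization of condition~(i) to multipartite blocks: one must verify that a correlation of the form \eqref{Lkpart} restricted to the parties in $g$ really is local across the bipartition of $g$ induced by separating one block from the rest, so that the GMNL-detection hypothesis on $\mathcal{I}^g$ applies. This rests on the non-signaling assumption built into the $p_{S_i}$ and on the fact that tensoring in the remaining blocks and marginalizing preserves locality across the chosen cut; it is routine but is the one place where the depth setting differs structurally from the bipartite case.
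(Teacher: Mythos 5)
Your proposal is correct and follows exactly the route the paper intends: the paper's own proof of Theorem~\ref{thm3} is just the statement that it ``can be proven in the same way as Theorem~\ref{thm:GMNL_ineq},'' and your write-up faithfully carries out that transcription, including the one genuinely needed observation that merging the blocks $S_l$ with $l\neq p$ into $\bar{S}$ turns an $L_k$-producible correlation into one that is local across a bipartition separating any split pair $i,j\in g$, so condition (i) applies.
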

\begin{proof}
This theorem can be proven in the same way as Theorem \ref{thm:GMNL_ineq} and so we do not repeat it here. 
\end{proof}

Let us now illustrate Theorem \ref{thm3} with a couple of examples.

\textbf{Example 3.} First, let us consider the expression in Eq. (\ref{I1}), that is, 
\begin{equation}
 \sum_{j=2}^n I^{1,j}_{\boldsymbol{0}|\boldsymbol{0}}   
\end{equation}
as well as an $L_{k}$-partition of the set $\mathcal{A}$. Let us then denote by $S_{1}$ the set such that $A_{1}\in S_{1}$. Then, we can bound each term $I^{1,j}_{\boldsymbol{0}|\boldsymbol{0}}$ in a similarly way as in the proof of inequality \eqref{improved00}: for all $A_{j}\notin S_{1}$ we have $I^{1,j}_{\boldsymbol{0}|\boldsymbol{0}}\leq 0$, while for $A_{i}\in S_{1}$ we have $I^{1,i}_{\boldsymbol{0}|\boldsymbol{0}}\leqslant p(\mathbf{0}|\mathbf{0})-p(1\boldsymbol{0}_{n-1}|1\boldsymbol{0}_{n-1})$. Since the size of $S_{1}$ is upper-bounded by $k$, the following inequality holds true that for all $k$-producible $p(\boldsymbol{a}|\boldsymbol{x})$:
\begin{equation}
\sum_{j=2}^n I^{1,j}_{\boldsymbol{0}|\boldsymbol{0}}\leqslant (k-1)[p(\mathbf{0}|\mathbf{0}) -p(1\mathbf{0}_{n-1}|1\mathbf{0}_{n-1})].
\end{equation}
Violation of this inequality by some $\vec{p}$ implies that the nonlocality depth of this correlation is at least $k+1$.

\textbf{Example 4.} Interestingly, a similar idea can be applied to Ineq. \eqref{Isym} to formulate another inequality capable of detecting nonlocality depth. In this case, however, the derivation of the maximal value over the $k$-producible correlations is slightly more difficult. 

The task is to find the $L_k$ partition that maximizes the number of $I^{i,j}_{\boldsymbol{0}|\boldsymbol{0}}$ terms between parties from the same subset, which can be upper-bounded by $I_{\mathbf{0}|\mathbf{0}}^{i,j}\leqslant p(\mathbf{0}|\mathbf{0})$. Let us consider an $L_{k}$ partition for which there exist two sets $S_{l_{1}},S_{l_{2}}$ such that $|S_{l_{1}}|\leqslant |S_{l_{2}}|<k$. It is easy to show that a different $L_{k}$ partition for which $S_{l_{2}}'=S_{l_{2}} \cup \{s\}$, $S_{l_{1}}'=S_{l_{1}}\setminus \{s\}$, where $s\in S_{l_{1}}$ and $S_{l}'=S_{l}$ for any other $l$, gives a higher number of such $I^{i,j}_{\boldsymbol{0}|\boldsymbol{0}}$ terms. This implies that the optimal strategy for maximizing this number is to take the partition for which $|S_{i}|=k$ for all $i\in \{1,\ldots,L-1\}$ and $|S_{L}|=n\mod k$, where $L=\lceil n/k\rceil$. We thus obtain the following inequality
\begin{equation}
    \sum_i\sum_{j>i}I^{i,j}_{\boldsymbol{0}|\boldsymbol{0}}\leqslant\left[\left\lfloor\frac{n}{k}\right\rfloor\binom{k}{2}+(n_k)\binom{n_k}{2}\right]\pz,
\end{equation}
where $n_k\equiv n\mod k$, violation of which detects nonlocality depth $k+1$.

\subsection{A possible generalization to more outcomes}
\label{qutrits}

In this section we provide a possible way of generalizing the above inequalities detecting GMNL to Bell scenarios involving more outcomes, however, concentrating on the simplest case of $d=3$.
The most natural candidate for the seed inequality is the CGLMP Bell inequality \cite{Collins_2002}, 
which in the general $d$-outcome can be written in a form highly resembling the one in Eq. \eqref{CH} as \cite{PhysRevLett.100.120406,Zohren_2010},
\begin{equation}
\begin{aligned}
P(A_0&<B_0)-P(A_0<B_1)\\
&-P(B_1<A_1)-P(A_1<B_0)\leqslant 0,
\end{aligned}
\end{equation}
where 
\begin{equation}
P(A_i<B_j)=\sum_{\substack{a,b=0\\a<b}}^{d-1}p(ab|ij)    \qquad (i,j=0,1).
\end{equation}
In the particular case of $d=3$, after relabeling the outcomes of $B_0$ such that $0\leftrightarrow 2$, the above expands as
\begin{eqnarray}\label{cglmp1} 
J_{3}&:=&p(01|00)+p(00|00)+p(10|00)\nonumber\\
&&-p(01|01)-p(02|01)-p(12|01)\nonumber\\
&&-p(10|11)-p(20|11)-p(21|11)\nonumber\\
&&-p(01|10)-p(00|10)-p(10|10)\leqslant 0.
\end{eqnarray}
We work in the tripartite scenario and hence the $\mathcal{I}^{g}$ expressions will be denoted more conveniently by $\mathcal{I}^{AB}$, $\mathcal{I}^{AC}$ and $\mathcal{I}^{BC}$. To satisfy the condition that all $\mathcal{I}^{g}$'s entering the expression \eqref{eq4} have the same positive part we will add two expressions together, one lifted by $1|0$ and another by $0|0$. Yet, the expression \eqref{cglmp1} has a different number of positive terms with outcomes $1$ and $0$ on a given position. In order to equalize these terms we will add $\tilde{J}_3$ which is obtained from $J_3$ by permuting the outcomes $0$ and $1$, that is,
\begin{eqnarray}\label{cglmp2}    
\tilde{J}_3&:=&p(10|00)+p(11|00)+p(01|00)\nonumber\\
&&-p(10|01)-p(12|01)-p(02|01)\nonumber\\
&&-p(01|11)-p(21|11)-p(20|11)\nonumber\\
&&-p(10|10)-p(11|10)-p(01|10).
\end{eqnarray}
We then consider the following expression
\begin{equation}
\mathcal{I}^{i,j}=J_3^{i,j}({1|0})+\tilde{J}_3^{i,j}({0|0}),
\end{equation}
where $J_3(1|0)$ is the expression $J_3$ lifted to the third party with measurement $0$ and outcome $1$, whereas $\tilde{J}_3(0|0)$
is obtained by lifting $\tilde{J}_3$ to the third party with measurement choice and outcome set to 0. 
In this way we achieved that the positive terms in $\mathcal{I}^{AB}$, $\mathcal{I}^{BC}$ and $\mathcal{I}^{AC}$ are the same,
\begin{eqnarray}
\mathcal{I}_{+}&:=&\mathcal{I}_{+}^{AB}=\mathcal{I}_{+}^{AC}=\mathcal{I}_{+}^{BC}\nonumber\\
&=&p(001|000)+p(010|000)+p(011|000)\nonumber\\
&&+p(100|000)+p(101|000)+p(110|000).\nonumber\\
\end{eqnarray}
Additionally $\mathcal{I}^{AB}$ and $\mathcal{I}^{AC}$ share some common negative terms
\begin{eqnarray}
T &=& p(001|100)+p(010|100)+p(011|100)\nonumber\\
&&+p(100|100)+p(101|100)+p(110|100).
\end{eqnarray}
Now, we have everything to construct Bell inequalities which analogously to inequalities \eqref{Isym} and \eqref{I1} witness GMNL in $(3,2,3)$ scenario,
\begin{equation}
\mathcal{I}^{AB}+\mathcal{I}^{AC}+\mathcal{I}^{BC}\leqslant\mathcal{I}^{+}
\end{equation}
and
\begin{equation}\label{I13}
\mathcal{I}^{AB}+\mathcal{I}^{AC}\leqslant \mathcal{I}_{+} -T.  
\end{equation}

Both these inequalities can be violated using a three-qutrit GHZ state
$\ket{\mathrm{GHZ}_{3,3}}$ [cf. Eq. \eqref{GHZ}].
Inequality \eqref{I13} was further tested on pure states drawn randomly with the Haar measure (notice that biseparable states form a set of measure zero), which are invariant under a permutation of two parties. We tested $100$ states and in each case, we managed to find measurements such that obtained correlations violated \eqref{I13}. This may suggest that the analytical results from Ref. \cite{Curchod_2019} that every pure tripartite GME state which is invariant under permutation of two the parties $B$ and $C$ is GMNL, generalize to the tripartite qutrit states. 

It is also worth pointing out that the numerically found violations are much more noise-sensitive as compared to the qubit scenario. For example, the 3-qutrit GHZ state violated inequality \eqref{I13} with maximal $q\approx0.02\%$ \eqref{noise}, whereas in the case of three-qubit GHZ more than $10\%$ of white noise could be added to still detect GMNL.

\section{Conclusions}

In this work, we showed how one can easily improve one of the classes of Bell inequalities witnessing genuine nonlocality in the multipartite scenario introduced recently in Ref. \cite{Curchod_2019}. We have shown that using the improved inequality one can prove the genuine nonlocality of every tripartite, pure, genuinely entangled state. We also demonstrated how the inequalities of Ref. \cite{Curchod_2019} can be turned into ones detecting nonlocality depth. We finally presented a possible approach to generalizing the inequalities of Ref. \cite{Curchod_2019} to the case of a higher number of outcomes. 

There is a couple of ways in which the above research line can be continued. First, generalizing our proof for tripartite genuinely entangled states, the inequalities we introduced can perhaps be used to prove that all pure, genuinely entangled states are genuinely nonlocal which is a long-standing conjecture. Next, there is a question of how to maximally increase the common negative term giving a greater noise-robustness of the inequality. Here, a couple of approaches are possible. One could, for example, rewrite the "seed" inequalities using non-signaling assumption, or perhaps lift these inequalities using different measurement settings and results, akin to what we did for $3$-outcome inequality \eqref{I13}. Another interesting question is how to generalize this approach to deliver Bell inequalities detecting GMNL in scenarios involving arbitrary number of outcomes and thus being applicable to multipartite states composed of many qudits. While we have made the first step towards achieving this goal by using the three-outcome CGLMP Bell inequality, how to obtain a general construction remains open. A related question is whether using the SATWAP Bell inequality can be advantageous in this direction \cite{SATWAP}.

Lastly, there is a possibility of adapting this technique for different scenarios. One example of such would be a network scenario, either with independent sources (see e.g. \cite{Branciard_2012}) or in the local operations and shared randomness model \cite{PhysRevLett.127.200401}.  For both of these, such a technique could prove highly desirable, as derivation of suitable Bell inequality is a notoriously difficult problem in the network scenario.

\section{Acknowledgments}

We acknowledge the support by the (Polish) National Science Center through the 
SONATA BIS Grant No. 2019/34/E/ST2/00369.



%

\newpage
\onecolumngrid
\appendix
\section{Proof of Theorem \ref{thm}}\label{app:GME=GMNL}
In this appendix, we present a detailed proof of Thm. \ref{thm}, which we recall below
\setcounter{thm}{1}
\begin{thm}
Every tripartite, pure, genuinely multipartite state violates Ineq. \eqref{improved00}. 
\end{thm}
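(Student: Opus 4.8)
The plan is to establish the statement for arbitrary local dimensions $d_1,d_2,d_3$ by reducing it to the three-qubit case and then settling that case directly. Since Ineq.~\eqref{improved00} for $n=3$, i.e. \eqref{eq:ineq_n=3}, refers only to dichotomic measurements with outcomes $0$ and $1$, a violation is certified (and GMNL follows by Theorem~\ref{thm:GMNL_ineq}) as soon as one exhibits, for each party, a pair of two-outcome measurements on $\mathbb{C}^{d_i}$ making the left-hand side of \eqref{eq:ineq_n=3} strictly positive. I would assemble two ingredients: (a) a \emph{projection lemma}, stating that every tripartite pure GME state can be mapped, by local projections onto two-dimensional subspaces $V_i\subset\mathbb{C}^{d_i}$, to a genuinely entangled three-qubit state; and (b) a \emph{three-qubit base case}, showing that every three-qubit GME state violates \eqref{eq:ineq_n=3} with a strictly positive margin.

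For ingredient (a), let $P_i$ be the projector onto $V_i$ and put $\ket{\psi_0}=(P_1\otimes P_2\otimes P_3)\ket{\psi}$. A three-qubit pure state is GME iff it is entangled across all three bipartitions, an open and generic condition: the triples $(V_1,V_2,V_3)$ for which $\ket{\psi_0}$ either vanishes or factorizes across some cut form a proper (hence measure-zero) subvariety of the manifold parametrizing such triples, provided this subvariety is not everything. I would show it is proper by exhibiting one admissible triple, e.g.\ taking $V_1$ spanned by two leading Schmidt vectors across $1|23$ and then choosing $V_2,V_3$ so that the corresponding partners stay linearly independent after projection, and verifying the analogous rank-two condition across the other two cuts. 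Generic local qubit projections then preserve genuine entanglement, yielding a three-qubit GME state $\ket{\tilde\psi}\propto\ket{\psi_0}$.

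For the base case (b) I would follow the route sketched in the main text. Using the generalized Schmidt decomposition \eqref{decompose} I reduce to states that are \emph{not} symmetric under a swap of two parties, since symmetric three-qubit GME states already violate \eqref{I1} by the result of Ref.~\cite{Curchod_2019}, which implies violation of \eqref{improved00}. For the non-symmetric states I would pick the one-parameter family of measurements, parametrized by $\alpha\in[0,\pi/2]$, that forces $p(001|001)=p(010|010)=p(000|101)=0$ identically, and then prove that the residual expression $p(000|000)-p(100|100)-p(000|110)$ is strictly positive for a suitable $\alpha$; denote this positive value by $\tilde I>0$.

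The delicate point, and the step I expect to be the main obstacle, is transporting the violation $\tilde I$ from $\ket{\tilde\psi}$ back to the full state $\ket{\psi}$. Given dichotomic qubit effects $\{\Pi_i^{x_i,a}\}$ on $V_i$, one must extend them to genuine two-outcome measurements on $\mathbb{C}^{d_i}$ by assigning the complement to one outcome, $M_i^{x_i,a}=\Pi_i^{x_i,a}+\delta_{a,c_i^{x_i}}(\I-P_i)$ with $c_i^{x_i}\in\{0,1\}$. Because each $M_i^{x_i,a}$ is block-diagonal with respect to $V_i$ versus $V_i^{\perp}$, the cross terms between $\ket{\psi_0}$ and its orthogonal complement vanish and every probability splits as
\[
p(\boldsymbol a|\boldsymbol x)=N\,\tilde p(\boldsymbol a|\boldsymbol x)+\mathrm{leak}(\boldsymbol a|\boldsymbol x),\qquad N:=\braket{\psi_0}{\psi_0},
\]
with a nonnegative leakage coming solely from the part of $\ket{\psi}$ outside $V_1\otimes V_2\otimes V_3$. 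Analysing when the bits $c_i^{x_i}$ can annihilate this leakage, one finds that the five negative terms of \eqref{eq:ineq_n=3} impose mutually incompatible constraints (for instance $p(100|100)$ and $p(000|110)$ demand opposite assignments for party $1$'s second setting), so at most three of them can be made leakage-free at once; a residual nonnegative leakage in the negative terms is therefore unavoidable and subtracts from the margin $N\tilde I$. The crux is then to bound this residual leakage strictly below $N\tilde I$. I would attack it by exploiting the remaining freedom in the choice of $V_i$, together with an invertible local filtering of $\ket{\psi}$ before projection, to simultaneously enlarge $N$ and suppress the surviving leakage channels; making this bound quantitative and uniform over all tripartite GME states is the hard part of the argument.
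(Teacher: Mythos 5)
Your proposal does not prove the theorem; it defers exactly the part that constitutes the paper's proof. The statement (stated in the main text for three-qubit states, and proven in the appendix via the Carteret--Higuchi--Sudbery decomposition \eqref{decompose}) is settled in the paper by a fully explicit construction, and your ingredient (b) reproduces only its outline: reduce to states not symmetric under any swap using the result of Ref.~\cite{Curchod_2019} for \eqref{I1} (this is what yields the \emph{strict} ordering $a\geqslant b>c>d\geqslant 0$), choose an $\alpha$-parametrized family of measurements annihilating $p(001|001)$, $p(010|010)$, $p(000|101)$, and show the residue $p(000|000)-p(100|100)-p(000|110)$ is positive. But you never exhibit such measurements, never verify the three probabilities can be zeroed simultaneously, and never prove positivity. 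In the paper these steps are the entire content: explicit vectors such as $\ket{M^{(1)}_{0|1}}\propto -c^2\sin\alpha\ket{0}+a^2\cos\alpha\ket{1}$ built from the Schmidt coefficients, computation of the surviving probabilities \eqref{eq: probabilities_sub}, and then a nontrivial case split --- for $e=0$ one sets $\alpha=\pi/4$ and positivity reduces to the algebraic inequality \eqref{ineq:proof_e=0}, which holds precisely because of the strict inequalities $b>c>d$; for $e>0$ one observes the first two terms are positive on $\alpha\in\,]0,\pi/2[$ (this needs $c>0$ and $a>c$, again from non-symmetry) and uses an intermediate-value argument to find $\alpha$ making the third term vanish. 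Writing ``then prove the residual is strictly positive for a suitable $\alpha$'' is naming the theorem, not proving it; there is no a priori reason such a one-parameter family exists for every state in the family \eqref{eq:conditions}, and finding one is where the work lies.

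Your extension to arbitrary local dimensions $d_1,d_2,d_3$ is a different claim from the one proven in the paper (the paper explicitly leaves general genuinely entangled states as an open direction in the conclusions), and your own argument for it does not close: you correctly identify that embedding the qubit effects into $\mathbb{C}^{d_i}$ produces nonnegative leakage into the five negative terms of \eqref{eq:ineq_n=3}, that the complement-assignment bits $c_i^{x_i}$ cannot kill all of it, and then you concede that bounding the residual leakage below $N\tilde I$ uniformly is unresolved --- so the reduction strategy is admittedly incomplete at its crux. The projection lemma (a) is also only sketched: the genericity argument requires showing the ``bad'' set of subspace triples is a proper subvariety for \emph{every} GME state, i.e., exhibiting one admissible triple per state, which you gesture at via leading Schmidt vectors but do not establish across all three cuts simultaneously. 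In sum: for the theorem as stated and proven (three qubits), the computational core is missing; for the stronger claim you aim at, the decisive leakage bound is openly left as a gap.
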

\begin{proof}
By the virtue of \cite[Theorem 3]{Carteret_2000}, every pure, $3$-partite, qubit state can be written as
\begin{equation}\label{eq:decomposition}
\ket{\psi}_{ABC}=a \mathrm{e}^{\mathbbm{i}\varphi}\ket{000}+b\ket{011}+c\ket{101}+d\ket{110}+e\ket{111},
\end{equation}
where $a,b,c,d,e$ are nonnegative and $a\geqslant x$ for $x\in\{b,c,d,e\}$, and by $\mathrm{e}=2.718\dots$ we denote the Euler's number. Notice, that by permuting parties, we can rearrange $b,c,d$ without affecting other terms. Therefore, without a loss of generality, we can take $b\geqslant c\geqslant d$.

Furthermore, in Ref. \cite{Curchod_2019} it was shown that the inequality \eqref{I1} is violated by all pure, genuinely multipartite entangled states that are invariant under the permutation of a pair of qubits. As the new inequality \eqref{improved00} is \replaced{finer}{tighter} than \eqref{I1}, it directly follows that it also has to be violated by such states. Therefore, we can restrict ourselves to states that are not invariant under the permutation of any two qubits, which implies that $b>c>d$.

For the case of $n=3$, Ineq. \eqref{improved00} can be rewritten as
\begin{equation}\label{ineq:I^1_3}
p(000|000)-p(100|100)-p(010|010)-p(001|001)-p(000|110)-p(000|101)\leqslant 0.
\end{equation}
Therefore, to complete the proof we have to show that for all parameters $a,b,c,d,e$ satisfying 
\begin{equation}\label{eq:conditions}
a\geqslant b>c>d\geqslant 0,\qquad a\geqslant e\geqslant 0,\qquad a^{2}+b^{2}+c^{2}+d^{2}+e^{2}=1,
\end{equation} 
there exist measurements which performed on the state \eqref{eq:decomposition} lead to the violation of Ineq. \eqref{ineq:I^1_3}.

To this end, we assume that each party performs projective measurements which we denote by $M_{i|j}^{(k)}=\Ke{M_{i|j}^{(k)}}\!\Br{M_{i|j}^{(k)}}$, where $i$ is the outcome of the measurement $j$ performed by the party $A_{k}$, and we take the states $\Ke{M_{i|j}^{(k)}}$ to be
\begin{equation}
\begin{aligned}
\Ke{M_{0|0}^{(1)}}&=\cos{\alpha}\ket{0}+\sin{\alpha}\ket{1}, &\hspace{1.5cm} \Ke{M_{0|1}^{(1)}}&=\frac{1}{\eta_{1}} \left(-c^2\sin\alpha\ket{0}+a^2\cos\alpha\ket{1}\right),\\
\Ke{M_{0|0}^{(2)}}&=\ket{0}, & \Ke{M_{0|1}^{(2)}}&=\frac{1}{\eta_{2}} \left(c\sin\alpha\ket{0}+\left(b\cos\alpha+e\sin\alpha\right)\ket{1}\right),\\
\Ke{M_{0|0}^{(3)}}&=\ket{1}, & \Ke{M_{0|1}^{(3)}}&=\frac{1}{\eta_{3}} \left(\mathrm{e}^{\mathbbm{i}\phi} a\cos\alpha\ket{0}+c\sin\alpha\ket{1}\right),
\end{aligned}
\end{equation}
where $\alpha\in [0,\pi/2 ]$ is a free parameter, and $\eta_{k}$ for $k=1,2,3$ are normalization constants
\begin{equation}\label{eq:eta}
\eta_{1}^{2}= c^4\sin^2\alpha+a^4\cos^2\alpha,\qquad\eta_{2}^{2}= c^{2}\sin^{2}{\alpha}+\left(b\cos{\alpha}+e\sin{\alpha}\right)^{2},\qquad \eta_{3}^{2}=a^{2}\cos^{2}\alpha+c^{2}\sin^{2}\alpha.
\end{equation}
Notice, that measurement operators for outcome $1$ can be trivially derived from the relation $M_{0|j}^{(k)}+M_{1|j}^{(k)}=\mathbb{1}$ for all $j,k$. It is easy to check, that for this particular choice of measurements, we have
\begin{equation}
p(001|001)=p(010|010)=p(000|101)=0,
\end{equation}
while the remaining probabilities in Ineq. \eqref{ineq:I^1_3} are given by
\begin{equation}\label{eq: probabilities_sub}
\begin{gathered}
p(000|000) = c^2\sin^2\alpha,\qquad p(100|100) = \frac{1}{\eta_{1}^{2}}c^6\sin^2\alpha,\\
p(000|110) =\frac{1}{\eta_{1}^{2}\eta_{2}^{2}}\left[a^{2}c^2 \sin\alpha\cos\alpha+\left(a^{2}e\cos\alpha -bc^2\sin\alpha\right) \left(b\cos\alpha+e\sin\alpha\right)\right]^{2}.
\end{gathered}
\end{equation}
And so, after substituting the above probabilities to Ineq. \eqref{ineq:I^1_3} we get
\begin{equation}
c^2\sin^2\alpha- \frac{1}{\eta_{1}^{2}}c^6\sin^2\alpha -\frac{1}{\eta_{1}^{2}\eta_{2}^{2}}\left[a^{2}c^2 \sin\alpha\cos\alpha+\left(a^{2}e\cos\alpha -bc^2\sin\alpha\right) \left(b\cos\alpha+e\sin\alpha\right)\right]^{2}\leqslant 0.
\end{equation}

We need to consider two separate cases to prove the violation of this inequality. First, let us assume that $e=0$. Under this assumption we can take $\alpha=\pi/4$ which yields
\begin{equation}\label{ineq:proof_substitution}
c^2\frac{1}{2}- \frac{1}{2\eta_{1}^{2}}c^6\ -\frac{1}{4\eta_{1}^{2}\eta_{2}^{2}}\left(a^{2}c^2 -b^{2}c^2\right)^{2}\leqslant 0.
\end{equation}
Next, we multiply the above by $4\eta_{1}^{2}\eta_{2}^{2}$, and we use Eq. \eqref{eq:eta} to express $\eta_{1}$ and $\eta_{2}$ in terms of the coefficients $a,b,c,e,\alpha$, which gives us
\begin{equation}\label{ineq:proof_e=0}
\frac{1}{2} (b^{2}+c^{2})(a^{2}-c^{2})(a^{2}+c^{2}) -c^{2}\left(a^{2} -b^{2}\right)^{2}\leqslant 0.
\end{equation}
However, from Ineq. \eqref{eq:conditions} it follows that
\begin{equation}
\frac{1}{2}(b^2+c^2)>c^2\quad\textrm{and}\quad a^2-c^2>a^2-b^2\quad\textrm{and}\quad a^2+a^2>a^2-b^2,
\end{equation}
implying that the left-hand side of Ineq. \eqref{ineq:proof_e=0} is always positive, which implies the violation of Ineq. \eqref{ineq:I^1_3}.

For the case $e>0$ let us first focus on the first two terms of Ineq. \eqref{ineq:proof_substitution}
\begin{equation}
c^2\sin^2\alpha- \frac{1}{\eta_{1}^{2}}c^6\sin^2\alpha.
\end{equation}
It is easy to see that the above is positive for all $\alpha \in ] 0,\pi/2 [$, and so to show the violation of Ineq. \eqref{ineq:proof_substitution} it is sufficient to show that there exists $\alpha \in ] 0,\pi/2 [$ for which the third term from Ineq. \eqref{ineq:proof_substitution} equals $0$
\begin{equation}
a^{2}c^2 \sin\alpha\cos\alpha+\left(a^{2}e\cos\alpha -bc^2\sin\alpha\right) \left(b\cos\alpha+e\sin\alpha\right) =0.
\end{equation}
Observe, that the left-hand side is strictly positive for $\alpha=0$ and strictly negative for $\alpha=\pi/2$, and since the function is continuous for all $\alpha \in [0,\pi/2]$, there has to exist $\alpha \in ] 0,\pi/2 [$ for which the above equation holds true. Therefore, Ineq. \eqref{ineq:I^1_3} is violated. 

Lastly, let us note that this proof implies that all states of the form \eqref{eq:decomposition} that are not invariant under the permutation of a pair of qubits are GMNL, and so also GME. And since the GME states that are invariant under such operations were proven to be GMNL in \cite{Curchod_2019}, we have that all pure GME states are GMNL.
\end{proof}

\end{document}